%
%

\documentclass[american,aps,pra,reprint,floatfix,nofootinbib,superscriptaddress]{revtex4-1}

\usepackage{graphicx}
\usepackage{color}
\usepackage{amsfonts}
\usepackage{amsmath}
\usepackage{tikz}
\usepackage{times}
\usepackage{hyperref}

\newtheorem{theorem}{Theorem}

\newenvironment{proof}[1][Proof]{\noindent\textbf{#1.} }{\ \rule{0.5em}{0.5em}}
\def\Tr{\operatorname{Tr}}

\newcommand{\bra}[1]{\ensuremath{\left\langle#1\right|}}
\newcommand{\ket}[1]{\ensuremath{\left|#1\right\rangle}}

\let\originalleft\left
\let\originalright\right
\renewcommand{\left}{\mathopen{}\mathclose\bgroup\originalleft}
\renewcommand{\right}{\aftergroup\egroup\originalright}


\begin{document}
	\title{ Monogamy of Quantum Privacy}
	
	\author{Arun Kumar Pati}
	\email{akpati@hri.res.in}
	\affiliation{Quantum Information and Computation Group,\\
	Harish-Chandra Research Institute, HBNI, Chhatnag Road, Jhunsi, Allahabad 211019, Uttar Pradesh, India}

	\author{Kratveer Singh}
	\affiliation{Indian Institute of Science Education and Research Bhopal, Bhopal Bypass Road, Bhauri, Bhopal 462066, Madhya Pradesh, India}
	
	\author{Manish K. Gupta}
	\email{manishgupta@hri.res.in}
	\affiliation{Quantum Information and Computation Group,\\
		Harish-Chandra Research Institute, HBNI, Chhatnag Road, Jhunsi, Allahabad 211019, Uttar Pradesh, India}

	\date{\today }
	
	\begin{abstract}
	Quantum mechanics ensures that the information stored in a quantum state 
	is secure and the ability to send private information through a quantum channel is at least as great as the coherent information.~We derive trade-off relations between quantum privacy, information gain by Eve and the disturbance caused by Eve to the quantum state that is being sent through a noisy channel. For tripartite quantum states, we show that monogamy of privacy exists in the case of a single sender and multiple receivers. When Alice prepares a tripartite entangled state and shares it with Bob and Charlie through two different noisy quantum channels, we prove that if the minimally guaranteed quantum privacy between Alice and Bob is positive, then the privacy of information between Alice and Charlie has to be negative. Thus, quantum privacy for more than two parties respects mutual exclusiveness. Then, we prove a monogamy relation for the minimally guaranteed quantum privacy for tripartite systems. We also prove a trade-off relation between the entanglement of formation across one partition and the quantum privacy along another partition. Our results show that quantum privacy cannot be freely shared among multiple parties and can have implication in future quantum networks.
	\end{abstract}
	
	\pacs{03.67.Hk, 03.67.Dd, 03.67.Pp, 03.67.Lx}
	
	\maketitle
	
	\section{Introduction}
	Quantum mechanics ensures the privacy of a quantum state.  Given a single copy of the state, it is impossible to determine its quantum state. This is often corroborated by the no-cloning theorem \cite{noCloning82}. Similarly, if we tamper with a quantum state and try to delete it, the information moves from the original system to the environment as captured by the no-deletion theorem \cite{noDeletion}. Even more generally, if a physical process attempts to take away information from a quantum system and hides it in the correlation, then it is impossible to do so. The no-hiding theorem \cite{PhysRevLett.98.080502} proves that information which appears to be missing from the original system remains in the environment from which it can be recovered, in principle \cite{PhysRevLett.106.080401}. All these results have implications for privacy and entail potence of quantum information in the presence of environment.

	A quantum communication channel can be used for various quantum information processing tasks such as the transmission of quantum and classical information as may happen in the case of super dense coding \cite{PhysRevLett.69.2881}, teleportation \cite{PhysRevLett.70.1895}, remote state preparation \cite{PhysRevA.63.014302}, and distributed quantum dense coding \cite{PhysRevA.87.052319,PhysRevLett.93.210501,DC-2006,PhysRevA.92.052330}. It can also be used to share the information between a sender and a receiver that is reliably secret from any third party and can be used as a cryptographic key for private communication with the sender \cite{BB84,PhysRevLett.67.661}.
	
	Quantum communication channel has a distinct advantage over the classical communication channel. If there is information leakage to an eavesdropper (Eve) that is trying to infer the communication happening between the sender (Alice) and the receiver (Bob), then her presence can be detected. The secrecy of quantum communication against copying of the signal by Eve is guaranteed by laws of physics as it was shown in the quantum key distribution (QKD) protocol independently by Bennet and Brassard, and by Ekert \cite{BB84,PhysRevLett.67.661}. Sharing of the secret key is important for providing privacy as one can use it in cryptographic protocols such as the one-time pad \cite{OneTimePad}. Although quantum cryptography provides the best security available at present, it is not immune to attack and exploits by Eve due 
	to leakage of information or due to botched implementations \cite{PhysRevA.75.032314,QKD-Attack-2010,NJP-2011}. The existing systems may be prone to side-channel attacks that rely on imperfect experimental implementation and hence side-channel free QKD was proposed that replaces real channels with virtual channels in QKD protocol to eliminate the attack \cite{PhysRevLett.108.130502}. Further study on absolute limits of privacy has been done by Ekert and Renner \cite{Ekert-2014,RennerThesis05} and a recent security proof for quantum key distribution has been carried out by Tomamichel and Leverrier \cite{Tomamichel2017largelyself}. 
	
    In realistic quantum communication, the channel is always noisy and the information that is leaked to the environment reveals the activity between the sender and the receiver to the eavesdropper. For a single sender and a single receiver, it was initially understood that the critical part of such communication task is to securely share entanglement between 
    two parties. The amount of information that can be securely shared is proportional to the amount of entanglement that can be shared between the two parties. It was shown that secret key sharing between two parties is equivalent to ``entanglement purification" \cite{PhysRevLett.77.2818}. Later it was found that bound entangled states can also be used to share secret key without third-party sharing it. Security of such scheme was further investigated in Ref.\cite{PhysRevLett.79.4034,PhysRevA.54.2675,PhysRevLett.78.2256}.  Schumacher and Westmoreland quantified the privacy of a channel that is measured by the information available to the receiver and not available to any eavesdropper and showed that it can be made as large as the channel's coherent information \cite{PhysRevLett.80.5695,PhysRevA.54.2629}.
	
 	In this paper, we ask the question if quantum mechanics can ensure privacy for more than two parties? To answer this question we generalize the result of Schumacher and Westmoreland \cite{PhysRevLett.80.5695} and analyze the case where a sender (Alice) shares the entangled state with two receivers (Bob and Charlie) over a noisy quantum channel. First, we prove trade-off relations between quantum privacy, information gain by Eve and the disturbance caused by Eve to the quantum state that is being sent through a noisy channel. Next, we show that the minimal guaranteed quantum privacy obeys a strict monogamy relation for a single sender and two receivers. For a tripartite entangled state shared between Alice, Bob and Charlie through two different noisy quantum channels, we prove that if the quantum privacy between Alice and Bob is positive then the privacy of information between Alice and Charlie has to be negative. Thus, quantum privacy for more than two parties respects mutual exclusiveness. This means that if Alice and Bob can reliably share a secret key, then Alice and Charlie cannot share a secret key. Then, we prove a monogamy relation for the minimally guaranteed quantum privacy for tripartite systems. We also prove a trade-off relation between the entanglement of formation across one partition and the quantum privacy along another partition.
	
   The paper is organized as follows. In Section \ref{privacy}, we first review the definition of coherent information and then recall the result by 
   Schumacher and Westmoreland \cite{PhysRevLett.80.5695} on the privacy of a single sender and single receiver. In section \ref{trade}, we prove a trade-off relation between quantum privacy and information gain by the eavesdropper.  Similarly, we prove a trade-off relation between quantum privacy and the disturbance caused to the quantum state due to the presence of the eavesdropper. In section \ref{monogamy-relations}, we prove the mutually exclusive relation between the quantum privacy for a single sender and two receivers. Then, we show the monogamy relations for
   the privacy of a single sender and two receivers. Also, we find a complementarity relation between the entanglement of formation across one partition and the quantum privacy across another partition.  Finally, in section \ref{con}, we discuss the implications of this monogamy relation in quantum networks.
   	
\section{Coherent Information and Privacy}
	\label{privacy}
	We imagine a situation where Alice wants to send quantum information to Bob over a noisy quantum channel. She prepares a quantum system $\rho^{Q}$ that is part of a larger system $\rho^{RQ} = \ket{\Psi^{RQ} }\bra{\Psi^{RQ}} $ such that this compound system is initially in a pure entangled state. The system $R$ is called \emph{reference system} and $\rho^{RQ}$ is called ``purification" of $\rho^{Q}$ with $\rho^{Q}= \Tr_{R}( \ket{\Psi^{RQ} } \bra{\Psi^{RQ} } )$.

		\begin{figure}[ht!]
		\begin{center}
			\begin{tikzpicture}
			\draw[gray,very thick, dashed] (-2.5,-0.5) -- (-1.5,0.5) (-3,-0.5) node[black]{Alice};
			\draw [gray,very thick, dashed] (-1.5,0.5) -- (2,0.5) (-0.75,0.75) node[black]{R};
			\draw[gray,very thick, dashed] (2.0,0.5) -- (3,-0.5);
			\draw[black, very thick] (-1.5,-1.5) -- (0.0,-1.5);
			\draw[black, very thick] (-2.5,-0.5) -- (-1.5,-1.5) (-3,-0.9) node[black]{$\ensuremath{|\Psi^{RQ}\rangle}$};
			\draw[black,very thick] (0.0,-2.0) rectangle (0.8,-1) (0.4,-1.5) node[inner sep=0pt] {$U_{\mathcal{N}}$} (-0.75,-1.25) node[]{$Q$}; 
			\draw[black, very thick] (0.8,-1.25) -- (2,-1.25) (3.4,-0.5) node[black]{Bob} (3.4,-1) node[]{$\ensuremath{|\Psi^{RQ'E'}\rangle}$};
			\draw[black, very thick] (0.8,-1.75) -- (2,-1.75) (2.4,-1.75) node[black]{Eve} (1.5,-1.5) node[]{$E'$};
			\draw[black, very thick] (2,-1.25) -- (3,-0.5) (1.5,-1.0) node[]{$Q'$};
			\end{tikzpicture}
			\caption{Alice and Bob share the entanglement through a noisy channel. The joint state that Alice and Bob hold with the environment is 
			$\ensuremath{|\Psi^{RQ'E'}\rangle}$. $U_{\mathcal{N}}$ realizes the noisy channel as a unitary evolution in the enlarged Hilbert space.}
			\label{fig:one-receiver}
		\end{center}
	\end{figure}
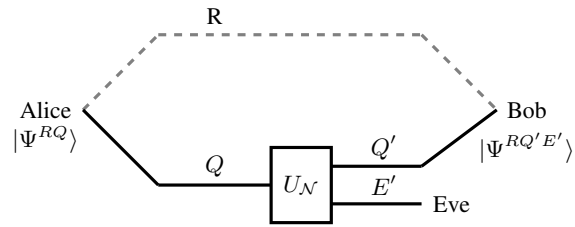

    Alice sends the state $Q$ over the noisy quantum channel to Bob. The noisy evolution of the state is described by the action of superoperator $\mathcal{E}^{Q}$ on the state $\rho^{Q}$, which is a completely positive trace preserving (CPTP) map. This noisy evolution described by superoperator can be represented by a unitary evolution operator $U^{QE}$ on a larger system that includes the environment, which is assumed to be initially in pure state $\ket{E}$. The final state of the system $Q$ received by Bob is given by~
    \begin{align}
	\rho^{Q'} &= \mathcal{E}^{Q} \left(\rho^{Q}\right) \nonumber \\
	&= \Tr_{E}\left[ \left(I^{R} \otimes U^{QE}\right)\left(\rho^{RQ} \otimes \ket{E}\bra{E} \right)\left(I^{R} \otimes{U^{QE}}^{\dagger}\right) \right].
	\end{align}
	As discussed in Ref.\cite{PhysRevA.54.2614}, the amount of information that is exchanged between the system $Q$ and the environment $E$ during the interaction is measured by the von Neumann entropy $S_{e}$. Since the environment is initially in a pure state, the entropy exchange is $S_{e}=S(\rho^{E'})$. The \emph{coherent quantum information}, 
    $I_{c}$ as introduced in Ref.\cite{PhysRevA.54.2629}, is given by~
	\begin{align}
	I_c(R \rangle Q')  = S(\rho^{Q'})-S_{e}
	    = S(\rho^{Q'})-S(\rho^{RQ'}).
	\end{align}
	In this paper, we will use the notation $I_c= I_c(R \rangle Q')= I_c(A \rangle B)$, as the subsystem $R$ is with Alice and the subsystem $Q'$ is with Bob, after $Q$ passes through a noisy channel. The coherent information captures how much entanglement can be retained between Alice and Bob when Alice sends one-half of an entangled pair through a noisy channel. The notion of coherent information plays an important role in quantum data processing and quantum error correction. It is an intrinsic quantity and satisfies the following properties \cite{wilde_2013}: 
	(i) the absolute value of the coherent information obeys $|I_c(R \rangle Q')| \le \log \dim {\cal H}_R $, (ii) under quantum operation it can never increase, i.e., it
	satisfies the data processing inequality $I_c(R \rangle Q') \leq I_c(R \rangle Q)$ as $\rho^{RQ'}=I^{R} \otimes \mathcal{E}^{Q}\left( \rho^{RQ}\right)$, and 
	(iii) $S(\rho^Q) \ge I_c(R \rangle Q')$. Note that iff $ S(\rho^Q) = I_c(R \rangle Q')$, then perfect quantum error-correction is possible.~Schumacher and Westmoreland \cite{PhysRevLett.80.5695} have shown that the ``optimal guaranteed privacy'' of the communication channel between Alice and Bob, as depicted in Fig.\ref{fig:one-receiver}, is lower bounded by the coherent information $I_c(A \rangle B)$.
	
	Next, imagine a situation where Alice wants to send classical information to Bob and she prepares a quantum system $Q$ by encoding information in one of the ``signal states", $\rho_{k}^{Q}$ with {\it a priori} probability $p_{k}$ such that the average state $\rho^{Q}$ is given by~
	\begin{align}
	\rho^{Q} = \sum_{k} p_{k}\rho_{k}^{Q}.
	\end{align}
	She sends the state over a noisy quantum channel and Bob receives the $k_{th}$ signal state as $\rho_{k}^{Q'}= \mathcal{E}^{Q}(\rho_{k}^{Q} )$. Since the superoperator is linear, the average state received by Bob is given by~
	\begin{align}
	\rho^{Q'} &= \sum_{k} p_{k} \mathcal{E}^{Q}(\rho_{k}^{Q}) = \mathcal{E}^{Q} \left(\rho^{Q}\right).
	\end{align}
	The amount of classical information $H_{Bob}$ that can be conveyed from Alice to Bob is governed by the Holevo quantity $\chi^{Q'}$, where~
	\begin{align}
	\chi^{Q'} = S(\rho^{Q'}) -\sum_k p_k  S(\rho_k^{Q'}).
	\end{align}
	
	The evolution superoperator $\mathcal{E}^{Q}$ that represents the effect of the eavesdropper can be represented as a unitary $U^{QE}$ evolution of a larger quantum system that includes an environment $E$. The evolution can be shown to be~
	\begin{align}
	\rho^{RQ} \otimes \ket{E}\bra{E} \xrightarrow[]{U^{QE}} \rho^{RQ'E'}.
	\end{align}
	The  amount of classical information $H_{Eve}$ that is available to Eve is governed by the Holevo quantity $\chi^{E'}$, where~
	\begin{align}
	\chi^{E'} = S(\rho^{E'}) -\sum_k p_k  S(\rho_k^{E'}).
	\end{align}
	
	The quantum ``privacy" of a channel between Alice and Bob is defined as~
	\begin{align}
	P_{AB} &= H_{Bob}-H_{Eve}.
	\end{align}
	Classically, any positive difference between $(H_{Bob}-H_{Eve})$ can be used to create a reliably secret key string of length $P$ \cite{1055892,256484,Devetak207}. We know that $H_{Bob} \leq \chi^{Q'} $, since Alice and Bob can use the best possible strategy to maximize the information gain from the channel. Similarly, for the eavesdropper,
	we know that $H_{Eve} \leq \chi^{E'}$ as Eve will also apply the best possible strategy to maximize the information gain that has been leaked to the environment. 
	One can define the guaranteed privacy $P_G = {\rm inf} P_{AB}$, where the infimum is taken over all the possible strategies that Eve can implement. The optimal guaranteed 
	privacy ${\cal P}_{AB} $ is defined as ${\cal P}_{AB} = {\rm sup} P_G$, where the supremum is taken over all strategies that Alice and Bob can implement. With this definition, Schumacher and Westmoreland \cite{PhysRevLett.80.5695} argued that~
	\begin{align}
	{\cal P}_{AB} \geq \chi^{Q'} - \chi^{E'} = I_c(A \rangle B),
	\end{align}
	where the coherent information $I_c(A\rangle B) = (\chi^{Q'} - \chi^{E'}) $. Therefore, the optimal	guaranteed \emph{private} information ${\cal P}_{AB}$ that can be shared between Alice and Bob is at least as great as its ability to send \emph{coherent} information.

\section{Trade-off Relations for Quantum Privacy}
\label{trade}

	In this section, we prove trade-off relations between quantum privacy, information gain by Eve and the disturbance caused to the quantum state by Eve during the eavesdropping. Intuitively, one can imagine that there should be some complementarity relation between quantum privacy and information gain by Eve. Indeed, we can find a trade-off relation between the amount of quantum information sent from Alice to Bob and the amount of classical information that Eve can gain. This is given by~
	\begin{align}
	I_c(A \rangle B) + H_{Eve} \le \log d,
	\end{align}
	where $d= dim {\cal H}_Q$. Physically, this shows that if channel is able to send more distinct quantum information from Alice to Bob, the amount of information gained by Eve will be less and vice versa. This is logically understandable. In the ideal situation, if there is no eavesdropper then perfect quantum information can be transmitted from Alice to Bob.

	Similarly, we can obtain a trade-off relation between the quantum privacy and the amount of classical information that Eve can gain, i.e.,~
	\begin{align}
	P_{AB} + H_{Eve} \le \log d.
	\end{align}
	This shows that if Alice and Bob can create more reliably secret string of key, then Eve will obtain less amount of information from the communication channel.
	
	 Suppose, we give all the computational power to Eve, then how much privacy can be maintained between Alice and Bob? In a worst-case scenario, Eve will try to make $H_{Eve}$ as close as possible to $\chi^{E'}$ by quantum technological arsenal at her disposal. Therefore, we can define ``minimal guaranteed privacy'' between Alice and Bob as~
	\begin{align}
	 P_{AB}^{min} =H_{Bob} -  {\rm max} H_{Eve} =  H_{Bob}- \chi^{E'}.
	\end{align}
	Since $H_{Bob} \le \chi^{Q'}$, we have $P_{AB}^{min} \le \chi^{Q'} - \chi^{E'}$ and hence $ P_{AB}^{min} \le I_c(A \rangle B)$. Thus, the minimal guaranteed privacy and the optimal guaranteed privacy obey the inequality~
	\begin{align}
	P_{AB}^{min} \le I_c(A \rangle B) \le {\cal P}_{AB}.
	\end{align}
	
	Now, one may ask how small can be the minimal guaranteed privacy? We can show that the minimal guaranteed privacy can be as small as the entanglement of formation \cite{PhysRevLett.76.722,PhysRevA.54.3824} for $\rho^{RQ'}$. Using the Carlen-Lieb inequality $E_{f} \left( \rho^{RQ'} \right) \geq {\rm max}  \left\lbrace S\left(R\right)-S\left(RQ'\right),S\left(Q'\right)-S\left(RQ'\right), 0 \right\rbrace$ \cite{CarlenLieb13} and assuming that $S\left(Q'\right)-S\left(RQ'\right)$ is the maximum of right-hand side, we have $P^{min}_{AB}~\leq~E_{f}\left(RQ'\right)$. 

	Next, we ask if there can be a trade-off relation between the quantum privacy and the disturbance caused by Eve to the quantum state that is sent through a noisy channel. We show that there is indeed a trade-off relation between them. Intuitively, one may say that a system is disturbed when the initial state is different than the final state and it is not possible to go back to the initial state in a reversible manner. The disturbance is usually an irreversible change in the state of the system, caused by a quantum channel. The disturbance measure should satisfy the following conditions: $(i)$ $D$ should be a function of the initial state $\rho^Q$ and the CPTP $\mathcal{E}^{Q}$ only, $(ii)$~$D$ should be null iff the CPTP map is invertible on the initial state $\rho^Q$, because in this case the change in state can be reversed hence the system is not disturbed, $(iii)$ $D$ should be monotonically non-decreasing under successive application of CPTP maps and $(iv)$ $D$ should be continuous for maps and initial states which do not differ too much.
	Keeping in mind these desirable properties, Maccone \cite{Maccone07} has defined a measure of disturbance that satisfies the above conditions. This is given by~
	\begin{align}
	D(\rho^Q, \mathcal{E}^{Q}) & \equiv S(\rho^Q) -I_c(R \rangle Q') \nonumber \\ 
	&= S(\rho^Q) - S(\mathcal{E}(\rho^Q)) + S( I \otimes (\mathcal{E}) (\ket{\Psi}^{RQ}\bra{\Psi})),
	\end{align}
	where $I_c(R \rangle Q')$ is the coherent information for $\rho^Q$ through a channel ${\cal E}^Q$. Since the coherent information $I_c$ is non-increasing under successive application of quantum operation, the disturbance measure is  indeed monotonically non-decreasing under CPTP map.  Note that $D(\rho^Q,\mathcal{E}^{Q})$ satisfies $0~\leq~D(\rho^Q, \mathcal{E}^{Q}) \leq 2\log_2({\rm dim}{\cal H}_Q)$.
	
	The trade-off relation between the disturbance and the minimal guaranteed privacy is given by~
	\begin{align}
	D(\rho^Q, \mathcal{E}^{Q}) + P_{AB}^{min} \le S(\rho^Q). 
	\end{align}
	Thus, given the limited amount of local entropy $S(A) = S(\rho^Q) $, the amount of minimal guaranteed privacy cannot be more if the disturbance caused by Eve is large. Since $S\left( A\right) = E \left( \ket{\Psi^{RQ}} \right) = -\Tr{\left(\rho^{Q}\log{\rho^{Q}}\right)}$ is the initial entanglement between $R$ and $Q$, the disturbance and the minimal privacy is also bounded by the  initial entanglement.
  
\section{Monogamy of Privacy}
\label{monogamy-relations}
    We now come to the main part of the paper where we analyze the private communication between a single sender (Alice) and two receivers (Bob and Charlie) as shown in Fig.\ref{fig:two-receiver}, where Alice shares the entangled state with Bob and Charlie by sending parts of the system over two separate noisy channels to them.

	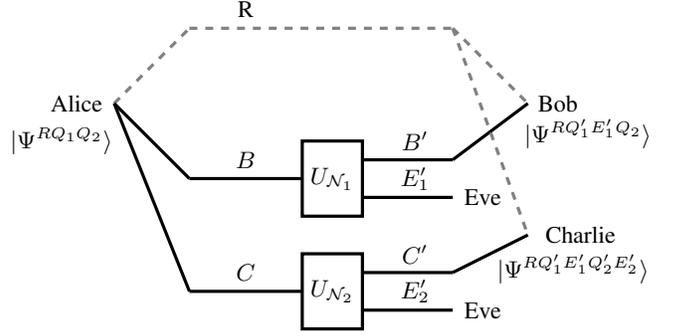
\begin{figure}[ht!]
		\begin{center}
			\begin{tikzpicture}
			\draw[gray,very thick, dashed] (-2.5,-0.5) -- (-1.5,0.5) (-3,-0.5) node[black]{Alice};
			\draw [gray,very thick, dashed] (-1.5,0.5) -- (2,0.5) (-0.75,0.75) node[black]{R};
			\draw[gray,very thick, dashed] (2.0,0.5) -- (3,-0.5);
			\draw[black, very thick] (-1.5,-1.5) -- (0.0,-1.5);
			\draw[black, very thick] (-2.5,-0.5) -- (-1.5,-1.5) (-3,-0.9) (-3.2,-1.0) node[black]{$\ensuremath{|\Psi^{RQ_{1}Q_{2}}\rangle}$};
			\draw[black,very thick] (0.0,-2.0) rectangle (0.8,-1) (0.4,-1.5) node[inner sep=0pt] {$U_{\mathcal{N}_{1}}$} (-0.75,-1.25) node[]{$B$}; 
			\draw[black, very thick] (0.8,-1.25) -- (2,-1.25) (3.4,-0.5) node[black]{Bob} (3.8,-0.9) node[black]{$\ensuremath{|\Psi^{RQ'_{1}E'_{1}Q_{2}}\rangle}$};
			\draw[black, very thick] (0.8,-1.75) -- (2,-1.75) (2.4,-1.75) node[black]{Eve} (1.5,-1.5) node[]{$E'_{1}$};
			\draw[black, very thick] (2,-1.25) -- (3,-0.5) (1.5,-1.0) node[]{$B'$};
			\draw[black, very thick] (-2.5,-0.5) -- (-1.5,-3.0) (-0.75,-2.75) node[black]{$C$};
			\draw[black, very thick] (-1.5,-3.0) -- (0.0,-3.0);
			\draw[black,very thick] (0.0,-2.5) rectangle (0.8,-3.5) (0.4,-3.0) node[inner sep=0pt,black] {$U_{\mathcal{N}_{2}}$} (1.5,-2.50) node[black]{$C'$}; 
			\draw[black, very thick] (0.8,-2.75) -- (2,-2.75) (3.7,-2.25) node[black]{Charlie} (3.6,-2.7) node[black]{$\ensuremath{|\Psi^{RQ'_{1}E'_{1}Q'_{2}E'_{2}}\rangle}$};
			\draw[black, very thick] (0.8,-3.25) -- (2,-3.25) (2.4,-3.25) node[black]{Eve} (1.5,-3.0) node[black]{$E'_{2}$};
			\draw[gray,very thick, dashed] (2.0,0.5) -- (3,-2.25);
			\draw[black,very thick] (2,-2.75) -- (3,-2.25);
			\end{tikzpicture}
			\caption{Alice, Bob, and Charlie share the entanglement through the noisy channels. The joint state after passing through two different noisy channels are  
			$\ensuremath{|\Psi^{RQ'_{1}E'_{1}Q_{2}}\rangle}$ and  $\ensuremath{|\Psi^{RQ'_{1}E'_{1}Q'_{2}E'_{2}}\rangle}$. $U_{\mathcal{N}_{1}}$ and $U_{\mathcal{N}_{2}}$ are the unitary realization of the noisy channels $\mathcal{E}_{1}^{Q}$ and $\mathcal{E}_{2}^{Q}$, respectively.}
			\label{fig:two-receiver}
		\end{center}
	\end{figure}

	In the sequel, we prove the monogamy relation for quantum privacy for the tripartite quantum system and explore how entanglement across one partition affects the privacy across another partition. Imagine that Alice prepares a pure entangled state $\rho^{RQ_{1}Q_{2}}=\ket{\Psi^{RQ_{1}Q_{2}}}\bra{\Psi^{RQ_{1}Q_{2}}}$ and sends $Q_{1}$ part of the system to Bob and $Q_{2}$ part to Charlie over noisy channels. The noisy channels act on the state and their action may be described by the superoperators $\mathcal{E}_{1}^{Q}~\text{and}~ \mathcal{E}_{2}^{Q}$, respectively. The superoperators $\mathcal{E}_{1}^{Q}~\text{and}~ \mathcal{E}_{2}^{Q}$ can be represented as unitary evolutions of larger quantum systems that include environments $E_{1}~\text{and}~E_{2}$, respectively. The environments can be assumed to be initially in pure state $\ket{E_{1}}$ and $\ket{E_{2}}$, respectively. The evolution can be shown to be~
	\begin{align}
	\rho^{RQ_{1}Q_{2}} \otimes \ket{E_{1}}\bra{E_{1}} &\xrightarrow[]{U_{Q_{1}E_{1}}} \rho^{RQ'_{1} E'_{1} Q_{2} } \\
	\rho^{RQ'_{1}E'_{1}Q_{2}} \otimes \ket{E_{2}}\bra{E_{2}} &\xrightarrow[]{U_{Q_{2}E_{2}}} \rho^{RQ'_{1}E'_{1}Q'_{2}E'_{2}}.
	\end{align}
	
    Since local operations do not affect entanglement of the parties involved, hence $RQ_{2}$ acts as a reference system during the communication between Alice to Bob, and remains unchanged. Hence, we can define the privacy of the channel between Alice to Bob as~
	\begin{align}
	P_{AB} = H_{Bob}-H_{Eve_1}.
	\end{align}
	Similarly, $RQ'_{1}E'_{1}$ acts as the reference system during the communication between Alice to Charlie, and remains unchanged. Hence, we can define the privacy of channel between Alice to Charlie as~
	\begin{align}
	P_{AC} &= H_{Charlie}-H_{Eve_2}.
	\end{align}

	Now, we ask if we give all the computational power to Eve, then how much privacy can be maintained between Alice to Bob and Alice to Charlie.
	In a competitive scenario, Bob will want to make $H_{Bob}$ as close as possible to $\chi^{Q'_1}$ by a suitable choice of coding and decoding observable. Similarly, Eve will try to make $H_{Eve_1}$ as close as possible to $\chi^{E'_{1}}$ by quantum technological arsenal at her disposal. Therefore, we can define ``minimal guaranteed privacy'' between Alice and Bob as~
	\begin{align}
	 P_{AB}^{min} =H_{Bob} -  {\rm max} H_{Eve{_1}}=  H_{Bob}- \chi^{E'_{1}}.
	\end{align}
	Since $H_{Bob} \le \chi^{Q'_1}$, we have $P_{AB}^{min} \le \chi^{Q'_1} - \chi^{E'_{1}}$ and hence $ P_{AB}^{min} \le I_c(A \rangle B)$. 
	Thus, the minimal guaranteed privacy and the optimal guaranteed privacy across Alice and Bob obey the inequality $ P_{AB}^{min} \le I_c(A \rangle B) \le {\cal P}_{AB}$.

	Similarly, we can define ``minimal guaranteed privacy'' between Alice and Charlie as~
	\begin{align}
	P_{AC}^{min} = H_{Bob} - {\rm max} H_{Eve{_2}} =  H_{Bob}- \chi^{E'_{2}}.
	\end{align}
	The minimal guaranteed privacy and the optimal guaranteed privacy across Alice and Charlie obey the inequality $ P_{AC}^{min} \le I_c(A \rangle C) \le {\cal P}_{AC}$.

	Having defined the minimal quantum privacy for the communication channel between Alice to Bob and between Alice to Charlie, we prove that they obey the exclusive monogamy inequality.
	 \begin{theorem}
	If $P_{AB}^{min}$ and $P_{AC}^{min}$ are the minimal privacy of information between Alice to Bob and Alice to Charlie, respectively, then the following mutually exclusive relation holds
	\begin{equation}
	P_{AB}^{min} + P_{AC}^{min} \le 0 .
	\end{equation}
	\end{theorem}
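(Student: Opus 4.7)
The plan is to reduce the monogamy inequality to a statement purely about coherent informations and then close the reduction by exploiting the purity of the dilated global state together with the data-processing inequality.

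Stage 1 (Reduction). The preceding discussion has already established, using $H_{Bob}\le\chi^{Q'_{1}}$ together with the identity $\chi^{Q'_{1}}-\chi^{E'_{1}}=I_{c}(A\rangle B)$, that $P_{AB}^{\min}\le I_{c}(R\rangle Q'_{1})$, and analogously $P_{AC}^{\min}\le I_{c}(R\rangle Q'_{2})$. Adding these two bounds reduces the claim to showing
\begin{align*}
I_{c}(R\rangle Q'_{1})+I_{c}(R\rangle Q'_{2})\le 0.
\end{align*}

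Stage 2 (Purity plus data processing). The five-party state $\ket{\Psi^{RQ'_{1}E'_{1}Q'_{2}E'_{2}}}$ is pure: it is the image of the pure input $\ket{\Psi^{RQ_{1}Q_{2}}}\otimes\ket{E_{1}}\otimes\ket{E_{2}}$ under the local channel dilations $U_{\mathcal{N}_{1}}\otimes U_{\mathcal{N}_{2}}$. Regrouping its registers as a pure \emph{tripartite} state on $\bigl(R,\ Q'_{1}E'_{1},\ Q'_{2}E'_{2}\bigr)$, the Schmidt decomposition gives $S(RQ'_{1}E'_{1})=S(Q'_{2}E'_{2})$ and $S(RQ'_{2}E'_{2})=S(Q'_{1}E'_{1})$, hence
\begin{align*}
I_{c}(R\rangle Q'_{1}E'_{1})+I_{c}(R\rangle Q'_{2}E'_{2})=0.
\end{align*}
Now I invoke the data-processing inequality for coherent information (recorded in Sec.~\ref{privacy}): discarding $E'_{1}$ on Bob's side is a CPTP map, so $I_{c}(R\rangle Q'_{1})\le I_{c}(R\rangle Q'_{1}E'_{1})$, and analogously $I_{c}(R\rangle Q'_{2})\le I_{c}(R\rangle Q'_{2}E'_{2})$. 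Summing the two inequalities and using the identity above yields $I_{c}(R\rangle Q'_{1})+I_{c}(R\rangle Q'_{2})\le 0$, which combined with Stage~1 proves the theorem.

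The algebra in Stage~2 is essentially one line each, so the main obstacle is keeping the bookkeeping honest in Stage~1: to make the subsequent data-processing step do any work, one must verify that the coherent information bounding $P_{AB}^{\min}$ really has only $R$ on the left and only $Q'_{1}$ on the right, rather than a larger reference or a larger receiver register that would either break the bound or make the monotonicity step vacuous. Once this identification is fixed, the theorem is a clean consequence of the fact, essentially forced by no-cloning, that a pure tripartite state has its two coherent informations from a fixed reference to the other two parts summing to exactly zero, and that throwing the Eves back to their respective receivers can only decrease each term.
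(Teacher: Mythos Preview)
Your proof is correct. Stage~1 matches the paper exactly. In Stage~2 the paper takes a more direct route: it applies the weak-monotonicity form of strong subadditivity, $S(Q'_{1})+S(Q'_{2})\le S(RQ'_{1})+S(RQ'_{2})$, to the tripartite marginal $\rho^{RQ'_{1}Q'_{2}}$, which is literally the statement $I_{c}(R\rangle Q'_{1})+I_{c}(R\rangle Q'_{2})\le 0$. You instead lift to the full pure state $\ket{\Psi^{RQ'_{1}E'_{1}Q'_{2}E'_{2}}}$, use Schmidt duality to obtain the exact identity $I_{c}(R\rangle Q'_{1}E'_{1})+I_{c}(R\rangle Q'_{2}E'_{2})=0$, and then drop the environments via data processing. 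The two arguments are equivalent at bottom---data processing for coherent information \emph{is} strong subadditivity---but they buy different things: the paper's one-line citation is shorter, while your detour through the purification makes visible that the bound is saturated precisely when the eavesdroppers are returned to their respective receivers, which is the no-cloning picture you allude to at the end.
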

	
	\begin{proof}
	By noting that $P_{AB}^{min} + P_{AC}^{min} \le (\chi^{Q'_{1}} - \chi^{E'_{1}}) + (\chi^{Q'_{2}} - \chi^{E'_{2}})$, we have
	 $P_{AB}^{min} + P_{AC}^{min} \le I_c(A \rangle B) + I_c(A \rangle C)$. 
	 Using the strong subadditivity of the von Neumann entropy \cite{Ruskai1973,PhysRevA.87.052319} for the tripartite system $ABC$ we have 
	\begin{align}
	\label{strongSubadditivityEq}
	S\left( \rho^{Q'_1} \right) + S\left( \rho^{Q'_2} \right) - S\left( \rho^{RQ'_1} \right) - S\left( \rho^{RQ'_2} \right) \leq 0.
	\end{align}
	Rearranging the above inequality we can write (\ref{strongSubadditivityEq}) as
	\begin{align}
	I_c\left(A \rangle B\right) + I_c\left(A \rangle C\right) &\leq 0. 
	\end{align}
	Hence, we have
	\begin{align}
	P_{AB}^{min} +P_{AC}^{min} &\leq 0.
	\end{align}
	\end{proof}

	This shows that Alice cannot share privacy simultaneously with Bob and Charlie. If the minimal quantum privacy of information between Alice and Bob is positive then the minimal privacy of information between Alice and Charlie has to be negative and vice versa. This means that if Alice shares a secret string of key with Bob, then she cannot share the secret key with Charlie.
	
	The above result for the single sender and two receivers can be generalized to the single sender and multiple receivers. Suppose we have a multipartite entangled state $\rho^{ABC\cdots N}$ shared between Alice, Bob, Charlie,$\cdots$, and Nancy. Now, we define the minimum privacy across $AB,AC,\cdots, AN$ over different noisy channels as $P^{min}_{AB},P^{min}_{AC},\cdots, \text{and}~P^{min}_{AN}$, respectively. We know that
	\begin{align}
	\sum_{i=1}^{N-1} S\left(A|B_{i}\right) \geq 0,
	\end{align}
	where $B_{i}=\lbrace B,C,D,\cdots,N\rbrace$.~This inequality can be proved by combining the strong subadditivity of the von Neumann entropy and the subadditivity of the conditional entropy \cite{PhysRevA.87.022314,nielsen_chuang_2010}. Using $P_{AB_{i}}^{min}  \leq I_c\left(A \rangle B_{i}\right)$ and negating the above inequality we can easily show that 
	\begin{align}
	\sum_{i=1}^{N-1} P_{AB_{i}}^{min}  \leq 0.
	\end{align}
	This means that Alice cannot share privacy simultaneously with Bob, Charlie,$\cdots$, and Nancy. If the minimal quantum privacy of information between Alice and Bob is positive then the minimal privacy of information between Alice and others has to be negative and vice versa. 
	
	\begin{theorem}
	If $P_{AB}^{min}$ and $P_{AC}^{min}$ defines the minimal guaranteed privacy of information between Alice-Bob and Alice-Charlie respectively, and ${\cal P}_{A(BC)}$ is the optimal guaranteed privacy between Alice to Bob and Alice, then the following monogamy relation holds
	\begin{equation}
	P_{AB}^{min} + P_{AC}^{min} \le {\cal P}_{A(BC)},
	\end{equation}
	where ${\cal P}_{A(BC)}$ is defined as ${\cal P}_{A(BC)} = sup P_G$ with $P_G \ge H_{Bob+Charlie} - \chi^{E}$ such that Bob and Charlie are allowed to do a joint measurement to maximize the information gain.
	\end{theorem}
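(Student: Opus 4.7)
The plan is to deduce the claimed inequality as a direct corollary of the preceding mutually-exclusive theorem, combined with the elementary observation that the optimal guaranteed privacy is non-negative. Concretely, I would chain two inequalities: an upper bound $P^{min}_{AB}+P^{min}_{AC}\le 0$ obtained exactly as in the preceding theorem, and a lower bound $\mathcal{P}_{A(BC)}\ge 0$ coming from Alice's trivial encoding.

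For the upper bound, I would start from the definitions and the Holevo bounds $H_{Bob}\le \chi^{Q'_1}$, $H_{Charlie}\le \chi^{Q'_2}$ to write $P^{min}_{AB}+P^{min}_{AC}=(H_{Bob}-\chi^{E'_1})+(H_{Charlie}-\chi^{E'_2})\le (\chi^{Q'_1}-\chi^{E'_1})+(\chi^{Q'_2}-\chi^{E'_2})=I_c(A\rangle B)+I_c(A\rangle C)$. Strong subadditivity of the von Neumann entropy applied to the tripartite marginal on $RQ'_1Q'_2$, exactly as in the preceding theorem, then gives $I_c(A\rangle B)+I_c(A\rangle C)\le 0$, so the sum of the minimal privacies is non-positive for any encoding of Alice and any separate decodings of Bob and Charlie.

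For the lower bound, I would note that $\mathcal{P}_{A(BC)}=\sup P_G$ is taken over all of Alice's encoding strategies and all joint decoding strategies available to Bob and Charlie; a permissible element of this family is the trivial one in which Alice prepares a single deterministic signal state, which makes both $H_{Bob+Charlie}$ and $\chi^E$ vanish and thus yields $P_G=0$. Hence $\mathcal{P}_{A(BC)}\ge 0$. Chaining with the previous step gives $P^{min}_{AB}+P^{min}_{AC}\le 0 \le \mathcal{P}_{A(BC)}$, which is the claim. The only conceptual subtlety, rather than a real obstacle, is that the strategies implicit in the left-hand side need not coincide with the one realising the supremum on the right, but this is harmless because the first inequality holds pointwise in the strategy while the second is a supremum statement. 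A tempting route to sharpen the bound, by invoking the Schumacher--Westmoreland-type inequality $\mathcal{P}_{A(BC)}\ge I_c(A\rangle BC)$ with $BC$ treated as a single receiver with joint decoding, does not obviously help, since $I_c(A\rangle BC)$ can itself be negative and does not dominate $I_c(A\rangle B)+I_c(A\rangle C)$ in general.
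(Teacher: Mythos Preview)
Your argument is correct and proves the theorem: $P_{AB}^{min}+P_{AC}^{min}\le I_c(A\rangle B)+I_c(A\rangle C)\le 0$ by weak monotonicity (Theorem~1), and $\mathcal{P}_{A(BC)}\ge 0$ via the trivial deterministic encoding; chaining these gives the claim.

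However, your route is genuinely different from the paper's, and your last remark is mistaken. The paper does \emph{not} pass through $0$; instead it establishes the sharper entropic monogamy
\[
I_c(A\rangle B)+I_c(A\rangle C)\le I_c(A\rangle BC)
\]
and then invokes the Schumacher--Westmoreland bound $\mathcal{P}_{A(BC)}\ge I_c(A\rangle BC)$. Contrary to your assertion, the displayed inequality \emph{is} valid for every tripartite state $\rho_{ABC}$ and follows from strong subadditivity. Equivalently, one must show $S(A|BC)\le S(A|B)+S(A|C)$. If this failed, then from $S(A|BC)\le S(A|B)$ (SSA) one gets $S(A|C)<0$, and from $S(A|BC)\le S(A|C)$ (SSA) one gets $S(A|B)<0$, whence $S(A|B)+S(A|C)<0$, contradicting weak monotonicity $S(A|B)+S(A|C)\ge 0$ (itself equivalent to SSA). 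So the route you dismissed does work.

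What each approach buys: yours is shorter and reuses Theorem~1 verbatim, but it only yields the theorem as stated and discards quantitative information. The paper's argument additionally delivers the coherent-information monogamy $I_c(A\rangle B)+I_c(A\rangle C)\le I_c(A\rangle BC)$, which is of independent interest and makes the bound nontrivial even when $I_c(A\rangle BC)<0$.
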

		
	\begin{proof}
	Since $P_{AB}^{min} + P_{AC}^{min} \le I_c(A \rangle B) + I_c(A \rangle C)$, we can use the strong subadditivity of entropy for the tripartite case \cite{nielsen_chuang_2010}, i.e.,
	
	\begin{align}
	S\left(\rho^{RQ'_1Q'_2} \right) + S\left(\rho^{Q'_1}\right) + S\left( \rho^{Q'_2}\right) \nonumber \\
	\leq S\left(\rho^{RQ'_1}\right) + 
	S\left(\rho^{RQ'_2}\right) + S\left( \rho^{Q'_1Q'_2}\right).
	\end{align}
	This is equivalent to  
	\begin{align}
	I_c\left(A\rangle B\right) + I_c\left(A \rangle C\right) &\leq I_c\left(A \rangle BC\right), \\
	{\rm where},~~~~ I_c\left(A\rangle BC\right) &= S\left( \rho^{Q'_1Q'_2}\right) - S\left(\rho^{RQ'_1Q'_2} \right). \nonumber \\
	\end{align}
	Using the inequality for the optimal guaranteed privacy, i.e. ${\cal P}_{A(BC)} \ge I_c(A \rangle BC)$, we have 
	$P_{AB}^{min} + P_{AC}^{min} \le {\cal P}_{A(BC)}$.
	\end{proof}
	
	Now, we ask how entanglement across one partition affects the minimal guaranteed privacy across the other partition. One of the remarkable feature in multi-partite scenario is that entanglement often respects monogamy \cite{PhysRevA.61.052306,PhysRevA.62.050302,PhysRevA.65.010301,HIGUCHI2000213}. Though not all measures of entanglement obey monogamy, they do respect it in a qualitative manner, in the sense, that if Alice is maximally entangled with Bob, then she cannot get maximally entangled with Charlie at the same time. For example, the square of concurrence satisfies the monogamy but the entanglement of formation does not satisfy the same. However, it was shown that the square of entanglement of formation does satisfy monogamy inequality for multiqubit states \cite{PhysRevLett.113.100503}. Similarly there are other correlation measures which can be monogamous under certain conditions. For example, quantum discord is not monogamous in general \cite{PhysRevA.85.040102}. Even though the mutual information and the entanglement of purification \cite{JMP.43.4286} capture the total correlation in a bipartite state, the mutual information for tripartite pure state is monogamous but the entanglement of purification is not \cite{PhysRevA.91.042323}. For a recent review see Ref.\cite{monogamyRev17}.

	Koashi and Winter \cite{PhysRevA.69.022309} proved that given a tripartite quantum state shared between Alice, Bob and Charlie, the entanglement 
	cost $E_C(AB)$ across Alice and Bob, and the distillable common randomness $C_D(AC)$ across  Alice and Charlie obey a trade-off relation, i.e.,
	\begin{equation}
	E_C(AB) + C_D(AC) \le S(A),
	\end{equation}
	where $E_C(AB) = {\lim}_{n \rightarrow \infty} \frac{1}{n} E_f({\rho^{RQ_1'}}^{\otimes n})$ 
	is the asymptotic cost to prepare a quantum state from singlets, given by the regularized entanglement of formation 
	and $C_D(AC) = {\lim}_{n \rightarrow \infty} \frac{1}{n} J({\rho^{RQ_2'}}^{\otimes n})$ is the regularized version of the classical correlation. The classical correlation \cite{HendersonVedral01} for a bipartite density operator $\rho^{AB}$ is defined as $J\left(\rho^{RQ'_{2}}\right)=J(A:B)= S(\rho^A) - {\rm min} \sum_i p_i S(\rho^{A|i})$, where $\rho^{A|i}=\frac{\Tr_{B}\left[ \left(I^{A} \otimes \pi^{B}_{i} \right) \rho^{AB} \left(I^{A} \otimes {\pi^{B}_{i}}^{\dagger}\right) \right]}{\Tr_{AB}\left[ \left(I^{A} \otimes \pi^{B}_{i} \right) \rho^{AB} \left(I^{A} \otimes {\pi^{B}_{i}}^{\dagger}\right) \right]}$ is the conditional state of $A$ given that a projective measurement has been performed on the subsystem $B$ with $p_i$ being the probability of having $i$th outcome.

	Here we prove a trade-off relation between the entanglement of formation across $AB$ and the minimal guaranteed privacy across $AC$ for any tripartite state.

	\begin{theorem}
	For any tripartite state, the following relation holds for the entanglement of formation and the minimal guaranteed privacy, i.e., we have
	\begin{equation}
	\label{EfPrivacyEq}
	E_f(AB) + P_{AC}^{min} \le S(A).
	\end{equation}
	
	\end{theorem}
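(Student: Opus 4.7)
The plan is to combine the Koashi-Winter identity, applied to a purification of the tripartite state $\rho^{RQ_1'Q_2'}$, with a sharpening of the bound $P_{AC}^{min}\le I_c(A\rangle C)$ from Section~\ref{monogamy-relations} into a bound by the one-way classical correlation $J(R:Q_2')$ defined in the paper. The final inequality then follows by chaining the two bounds together.

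First I would show $P_{AC}^{min}\le J(R:Q_2')$. Since $\chi^{E_2'}\ge 0$, we have $P_{AC}^{min}=H_{Charlie}-\chi^{E_2'}\le H_{Charlie}$. By construction, $H_{Charlie}$ is the classical information extracted by some POVM that Charlie performs on $Q_2'$ in order to decode Alice's classical message $M$. Alice's encoding amounts to a measurement on the reference system $R$ producing the classical register $M$, so Holevo's bound together with the data-processing inequality applied on the unmeasured side ($R\to M$ is a local operation that cannot increase the classical correlation across the $Q_2'$ cut) gives $H_{Charlie}\le J(M:Q_2')\le J(R:Q_2')$.

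Next I would invoke the Koashi-Winter identity. Let $\ket{\Psi^{RQ_1'Q_2'R'}}$ be any purification of $\rho^{RQ_1'Q_2'}$. Viewing this pure state as a tripartite state with parts $R$, $Q_1'$, and $Q_2'R'$, Koashi-Winter yields
\begin{equation}
E_f(\rho^{RQ_1'})+J(R:Q_2'R')=S(R).
\end{equation}
POVMs on $Q_2'$ tensored with the identity on $R'$ form a restricted subclass of POVMs on $Q_2'R'$, so $J(R:Q_2'R')\ge J(R:Q_2')$, and hence
\begin{equation}
E_f(\rho^{RQ_1'})+J(R:Q_2')\le S(R).
\end{equation}
Combining with the first step, $E_f(AB)+P_{AC}^{min}\le E_f(\rho^{RQ_1'})+J(R:Q_2')\le S(R)=S(A)$, which is the desired trade-off.

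The main obstacle is the first step: identifying the minimally guaranteed privacy, defined as a difference of Holevo-type quantities evaluated on a specific encoding ensemble, with the purely state-theoretic classical correlation $J(R:Q_2')$. Once this bound is established via Holevo's theorem and monotonicity of the classical correlation under local processing on the non-measured side, the rest of the argument is a direct application of Koashi-Winter together with the observation that shrinking the system on which the decoding measurement acts can only decrease the extractable classical correlation.
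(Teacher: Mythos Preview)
Your argument is correct and, like the paper's, hinges on the Koashi--Winter relation, but the intermediate step is different. The paper bounds $P_{AC}^{min}\le I_c(A\rangle C)$ (already established earlier in Section~\ref{monogamy-relations}) and then rewrites Koashi--Winter using the discord identity $I=J+D$ to obtain $E_f(\rho^{RQ_1'})+I_c(A\rangle C)\le D(\rho^{RQ_2'})$, finishing with the known bound $D(\rho^{RQ_2'})\le S(R)$. You instead bound $P_{AC}^{min}\le J(R:Q_2')$ directly by an operational argument (Holevo together with monotonicity of $J$ under local processing on the unmeasured side) and then apply Koashi--Winter in its native form $E_f+J\le S(R)$, which you recover for the mixed tripartite state by purifying. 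Your route is slightly more self-contained, avoiding the detour through discord and the separate appeal to $D\le S(R)$; the paper's route has the advantage of reusing the already-proved inequality $P^{min}\le I_c$ and of displaying the relation $E_f+I_c\le D$ as an intermediate result of independent interest. Two minor remarks: since $I_c(A\rangle C)\le J(R:Q_2')$ (equivalent to $D\le S(R)$), your bound $P_{AC}^{min}\le J$ is in fact a \emph{weakening}, not a sharpening, of $P_{AC}^{min}\le I_c$---it is simply the weakening that plugs directly into Koashi--Winter; and the purifying system $R'$ you introduce can be taken to be $E_1'E_2'$, since the global state $\ket{\Psi^{RQ_1'E_1'Q_2'E_2'}}$ in the paper's model is already pure.
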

	\begin{proof}
	For any tripartite state $\rho^{RQ'_1 Q'_2}$ we have the Koashi-Winter relation \cite{PhysRevA.69.022309} $E_f(\rho^{RQ'_1}) + J(\rho^{RQ'_2}) \le S(\rho^{R})$, where $J(\rho^{RQ'_2})$ is the classical correlation across Alice and Charlie.
	Using the above relation, we can show that $ E_f(\rho^{RQ'_1}) + I_c(A\rangle C) \le D(\rho^{RQ'_2})$ where $D(\rho^{RQ'_2})$ is the quantum correlation between Alice and Charlie. The quantum correlation $D(\rho^{RQ'_2})$ is nothing but the discord \cite{PhysRevLett.88.017901} across $\rho^{RQ'_{2}}$ and is defined as $D(\rho^{RQ'_2}) = I \left(\rho^{RQ'_{2}}\right)-J\left(\rho^{RQ'_{2}} \right)$, where $ I \left(\rho^{RQ'_{2}}\right) = S\left( \rho^{R}\right) + S\left( \rho^{Q'_{2}}\right) -S\left( \rho^{RQ'_{2}}\right) $ is the mutual information for the state $\rho^{RQ'_{2}}$. 
	
	Now, $ E_f(AB) + P_{AC}^{min} = E_f(\rho^{RQ'_1}) + P_{AC}^{min} \le  E_f(\rho^{RQ'_1}) + I_c(A\rangle C) \le D(\rho^{RQ'_2})$. However, $D(\rho^{RQ'_2}) \le  S(\rho^{R})= S(A) $ \cite{DattaThesis08,PhysRevA.82.052122}.
	Hence the proof. 
	\end{proof}

	This shows that for any tripartite state the amount of entanglement across one partition restricts the quantum privacy that can be shared across another partition. This is reminiscent of the trade-off relation between the entanglement cost and the distillable common randomness for any tripartite system. One can also physically interpret the relation (\ref{EfPrivacyEq}) as follows. From the compression theorem, we know that one can transfer the information contents of the system $R$ into $S(\rho^R)= S\left(A\right)$ qubits per copy. We know that this does retain correlation to other subsystems faithfully in the asymptotically limit. Therefore, the local entropy $S\left(\rho^R\right)$ represents the effective size of the subsystem $R$ measured in qubits. The above trade-off relation shows that the entanglement between the subsystem $R$ and the subsystem $Q'_1$ and the privacy between $R$ and subsystem $Q'_2$ is limited when the size of system $R$ is limited to $S\left(A\right)$ qubits. Physically, this implies that the quantum entanglement between one system and the privacy for the other system are mutually exclusive. Thus, the existence of quantum entanglement across one partition restricts the quantum privacy across the other partition.

\section{Discussions and Conclusions}
\label{con}

	Monogamy of quantum correlation such as quantum entanglement and other correlations play an important role in quantum communication. We have shown that the monogamy of quantum privacy exists for tripartite entangled states as well as for multipartite entangled state. In addition to the monogamy of privacy for the multipartite entangled state one can have a monogamy for the square of the privacy for multiqubit entangled states. To prove this, one can use the monogamy relation for the square of the entanglement of formation \cite{PhysRevLett.113.100503} which is given by
	$E^{2}_{f}\left( \rho_{AB}\right)+E^{2}_{f}\left( \rho_{AC}\right)+ 
	\cdots+E^{2}_{f}\left( \rho_{AN}\right) \leq E^{2}_{f}\left( \rho_{A|BC\cdots N}\right)$.
	For example, if we have a multiqubit entangled state $\rho^{ABC\cdots N}$ shared between Alice, Bob, Charlie,$\cdots$, and Nancy and the minimum privacy across $AB,AC,\cdots, AN$ over different noisy channels as $P^{min}_{AB},P^{min}_{AC},\cdots, \text{and}~P^{min}_{AN}$, respectively, then by using the Carlen-Lieb inequality \cite{CarlenLieb13} we can show that
	${\left(P^{min}_{AB}\right)}^{2}+{\left(P^{min}_{AC}\right)}^{2}+\cdots+ 
	{\left(P^{min}_{AN}\right)}^{2} \leq E^{2}_{f}\left( \rho_{A|BC\cdots N}\right)$.
	Therefore, the square of the minimal privacy for the single sender and multiple receivers respects the monogamy for multiqubit states. 
	
    To summarize, we have proved a trade-off relation between the quantum privacy and the information gain by Eve. Similar trade-off relation holds for the disturbance caused by Eve and the minimal guaranteed quantum privacy.~We have derived the exclusive monogamy relation for quantum privacy in the case of a single sender and multiple receivers and shown that if the sender has a positive privacy of shared information with the first receiver then the privacy with other receiver has to be negative. This means that both the parties cannot reliably generate a secret key due to the restrictions imposed by the exclusive monogamy inequality. Similarly, we proved another monogamy relation for the quantum privacy for any tripartite system which shows that the sum of minimal guaranteed privacies across Alice-Bob and Alice-Charlie are bounded by the optimal guaranteed privacy across Alice-(Bob and Charlie). Thus, quantum privacy cannot be freely shared across many subsystems.~Furthermore, we have proved a trade-off relation between the entanglement of formation and the quantum privacy across different partitions of a tripartite system. We hope that our results will have a significant impact in multiparty quantum key distribution and the quantum network when one is trying to share secret key simultaneously with multiple receivers. From our result, it follows that if Alice wants to share privacy with Bob and Charlie together, it is better to share two separate bipartite entangled states with both the receivers. On the other hand, if Alice wants to share privacy with one of the two receivers at a later point in time, she can share a tripartite entangled state and has the option to choose with whom to share the privacy. In future, it may be worth exploring if the monogamy relation holds for the optimal guaranteed privacy. 
 
	\begin{acknowledgments}
    KS acknowledges the support from HRI, Allahabad for summer visiting program.
	\end{acknowledgments}


\end{document}